\newcommand{\C}{\mathbb{C}}
\begin{document}

\theoremstyle{definition}
\newtheorem{theorem}{Theorem}
\newtheorem{definition}[theorem]{Definition}
\newtheorem{lemma}[theorem]{Lemma}
\newtheorem{claim}[theorem]{Claim}
\newtheorem{proposition}[theorem]{Proposition}
\newtheorem{exercise}[theorem]{Exercise}
\newtheorem{corollary}[theorem]{Corollary}
\newtheorem{conjecture}[theorem]{Conjecture}

\newcommand{\Z}{{\mathbb{Z}}}
\newcommand{\Exp}{\mathop\mathbb{E}}
\newcommand{\Var}{\mathop\mathrm{Var}}
\newcommand{\Cov}{\mathop\mathrm{Cov}}
\newcommand{\abs}[1]{\left| #1 \right|}
\newcommand{\wf}{\widehat{f}}
\newcommand{\wg}{\widehat{g}}
\newcommand{\din}{d^{\rm in}}
\newcommand{\dout}{d^{\rm out}}
\newcommand{\e}{{\rm e}}
\newcommand{\eps}{\varepsilon}
\newcommand{\inner}[2]{\left\langle #1 , #2 \right\rangle}
\newcommand{\bra}[1]{\left\langle #1 \right|}
\newcommand{\ket}[1]{\left| #1 \right\rangle}
\newcommand{\bbra}[1]{\bigl\langle #1 \bigr|}
\newcommand{\bket}[1]{\bigl| #1 \bigr\rangle}
\newcommand{\norm}[1]{\left| #1 \right|}
\newcommand{\vsym}{V_{\rm sym}}
\newcommand{\pisym}{\Pi_{\rm sym}}
\newcommand{\jun}{j_{\rm undirected}}
\newcommand{\eqdef}{\triangleq}
\newcommand{\ddz}{\frac{\mathrm{d}}{\mathrm{d}z}}
\newcommand{\Ferm}{\mathrm{Ferm}}
\newcommand{\Imm}{\mathrm{Imm}}
\newcommand{\poly}{\mathrm{poly}}
\newcommand{\oplusP}{\oplus \mathrm{P}}
\newcommand{\NP}{\mathrm{NP}}
\newcommand{\BPP}{\mathrm{BPP}}
\newcommand{\RP}{\mathrm{RP}}

\title{The complexity of the fermionant,\\ and immanants of constant width}
\author{Stephan Mertens \\ Institute for Theoretical Physics \\ Otto von Guericke University Magdeburg \\ and the Santa Fe Institute
\and
Cristopher Moore \\ Computer Science Department \\ University of New Mexico \\ and the Santa Fe Institute}
\maketitle

\begin{abstract}
In the context of statistical physics, Chandrasekharan and Wiese recently introduced the \emph{fermionant} $\Ferm_k$, a determinant-like quantity where each permutation $\pi$ is weighted by $-k$ raised to the number of cycles in $\pi$.  We show that computing $\Ferm_k$ is \#P-hard under Turing reductions for any constant $k > 2$, and is $\oplusP$-hard for $k=2$, even for the adjacency matrices of planar graphs.  As a consequence, unless the polynomial hierarchy collapses, it is impossible to compute the immanant $\Imm_\lambda \,A$ as a function of the Young diagram $\lambda$ in polynomial time, even if the width of $\lambda$ is restricted to be at most $2$.  In particular, if $\Ferm_2$ is in P, or if $\Imm_\lambda$ is in P for all $\lambda$ of width $2$, then $\NP \subseteq \RP$ and there are randomized polynomial-time algorithms for NP-complete problems.
\end{abstract}

\section{Introduction}

Let $A$ be an $n \times n$ matrix.  The \emph{fermionant} of $A$, with parameter $k$, is defined as
\[
\Ferm_k \,A = (-1)^n \sum_{\pi \in S_n} (-k)^{c(\pi)} \prod_{i=1}^n A_{i,\pi(i)} \, ,
\]
where $S_n$ denotes the symmetric group, i.e., the group of permutations of $n$ objects, and $c(\pi)$ denotes the number of cycles in $\pi$.   Chandrasekharan and Wiese~\cite{fermionant} showed that the partition functions of certain fermionic models in statistical physics can be written as fermionants with $k=2$.  This raises the interesting question of whether the fermionant can be computed in polynomial time.

Since $(-1)^{n+c(n)}$ is also the parity of $\pi$, the fermionant for $k=1$ is simply the determinant which can of course be computed in polynomial time. But this appears to be the only value of $k$ for which this is possible:
\begin{theorem}
\label{thm:main}
For any constant $k > 2$, computing $\Ferm_k$ for the adjacency matrix of a planar graph is \#P-hard under Turing reductions.
\end{theorem}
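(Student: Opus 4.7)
The plan is to recast $\Ferm_k$ as a signed sum over cycle covers, and then reduce a known \#P-hard counting problem on planar graphs to its evaluation, by means of planar gadgets and polynomial interpolation.

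\emph{Step 1: cycle-cover reformulation.}
Since $\mathrm{sgn}(\pi) = (-1)^{n-c(\pi)}$, one may rewrite
\[
\Ferm_k\,A \;=\; \sum_{C} \mathrm{sgn}(\pi_C)\, k^{c(C)},
\]
where $C$ ranges over directed cycle covers of the graph $G$ whose adjacency matrix is $A$. A 2-cycle corresponds to a single edge used as a ``matched pair'', and a cycle of length $\geq 3$ to an oriented cycle of $G$; for undirected $G$ the two orientations of a long cycle contribute identically. Thus $\Ferm_k\,A$ is a signed, $k$-weighted enumeration of partial perfect matchings composed with vertex-disjoint unoriented cycles in $G$.

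\emph{Step 2: choice of reduction source.}
Take as the source a counting problem that is already \#P-hard on planar graphs --- e.g.\ counting independent sets in planar graphs, or evaluating the Tutte polynomial of a planar graph at a point where this is \#P-hard (Vertigan). Given an instance $H$, the goal is to produce a family of planar graphs $\{G_t\}_{t=1}^{\poly(|H|)}$ such that $\Ferm_k\,A_{G_t}$, viewed as a function of the integer parameter $t$, is a polynomial of polynomial degree whose coefficients encode the hard count on $H$.

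\emph{Step 3: planar gadgets.}
Construct $G_t$ by replacing each vertex and each edge of $H$ with a small planar gadget containing a ``thickening'' parameter $t$ (for example, a chain of $t$ copies of a fixed subgadget, or $t$ parallel edges realized planarly through a 2-way splitter). The cycle-cover expansion of Step~1 then factorizes: the contribution of each gadget depends only on the ``boundary condition'' --- namely how the global cycle cover enters and leaves it --- so summing over gadget states yields, for each $t$, a polynomial combination of structural invariants of $H$ (matchings, long cycles, connectivity patterns) that jointly determine the source count.

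\emph{Step 4: interpolation.}
Invoke the $\Ferm_k$ oracle on $G_1,G_2,\ldots,G_{\poly(|H|)}$ and interpolate in $t$ to recover the coefficients of the resulting polynomial. A final linear combination then extracts the target count, completing the Turing reduction.

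\emph{Main obstacle.}
The crux is Step~3: designing planar gadgets that (i) preserve planarity of $G_t$, (ii) localize the effect of each gadget on the cycle-cover expansion, and (iii) make the coefficient of some monomial $t^j$ in $\Ferm_k\,A_{G_t}$ equal to a known linear transform of the target count on $H$. The bookkeeping is delicate because each cycle contributes simultaneously through $\mathrm{sgn}(\pi_C)$, through $(-k)^{c(C)}$, and (for length $\geq 3$) through an orientation multiplicity of $2$, so local contributions must be computed as matrices indexed by boundary states rather than as scalars. A secondary difficulty is showing that the polynomial in $t$ is nondegenerate and that the linear system extracted from the interpolated coefficients is invertible; since $k$ is a fixed real parameter $>2$, one must verify that no accidental cancellations in the gadget matrices make the system singular at the given $k$.
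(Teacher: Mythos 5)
Your Step 1 is correct ($\Ferm_k\,A = \sum_\pi \mathrm{sgn}(\pi)\,k^{c(\pi)}\prod_i A_{i,\pi(i)}$, i.e.\ a signed, $k$-weighted sum over cycle covers), but the proposal has a genuine gap: Steps 3 and 4 are the entire content of the reduction, and you leave them as a declared ``main obstacle'' rather than carrying them out. You never exhibit the planar gadgets, never compute their boundary-state transfer matrices, and never verify that the resulting linear system is nonsingular for the given real $k>2$ --- and the sign/orientation/cycle-count bookkeeping you flag is exactly where such a construction can collapse. As written, this is a plan for a proof, not a proof, and there is no evidence offered that gadgets with properties (i)--(iii) exist.

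The missing idea that makes the theorem a short argument is that no gadgets or interpolation are needed at all, because the hard work is already packaged in known identities and in Vertigan's theorem. For a planar $G$, take its directed medial graph $G_m$ (every vertex has in- and out-degree $2$); Martin's identity gives $j(G_m;z) = z^{c(G)}\,T(G;z+1,z+1)$ for the circuit partition polynomial $j$. Then convert edge-covering circuit partitions into vertex-covering cycle covers by passing to the directed edge graph $G_{m,e}$, which is still planar; circuit partitions of $G_m$ correspond bijectively to permutations contributing to the fermionant of its adjacency matrix, so $\Ferm_k\,A_{m,e} = j(G_m;-k) = (-k)^{c(G)}\,T(G;1-k,1-k)$. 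Vertigan's theorem says evaluating the planar Tutte polynomial at the diagonal point $(1-k,1-k)$ is \#P-hard (under Turing reductions) precisely when $k>2$, so a single oracle call per instance finishes the reduction; the interpolation-by-decoration machinery you propose to build lives inside Vertigan's proof and need not be reinvented. Note also that the only Tutte point your cycle-cover sum naturally reaches is this diagonal point, which is why the argument works for $k>2$ but breaks at $k=2$, where $T(G;-1,-1)$ is computable in polynomial time --- a constraint your Step 2 (``pick any \#P-hard planar source and gadgetize'') does not engage with.
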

\noindent
Indeed, recent results of Goldberg and Jerrum~\cite{inapprox-tutte} imply that for $k > 2$ the fermionant is hard to compute even approximately.  Unless $\NP \subset \RP$, for $k > 2$ there can be no ``fully polynomial randomized approximation scheme'' (FPRAS) that computes $\Ferm_k$ to within a multiplicative error $1+\eps$ in time polynomial in $n$ and $1/\eps$ with probability at least, say, $3/4$.  

For the physically interesting case $k=2$, we have a slightly weaker result.  Recall that $\oplusP$ is the class of problems of the form ``is $|S|$ odd,'' where membership in $S$ can be tested in polynomial time. 
\begin{theorem}
\label{thm:k2}
Computing $\Ferm_2$ for the adjacency matrix of a planar graph is $\oplusP$-hard.
\end{theorem}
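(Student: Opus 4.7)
The plan is to reduce from an $\oplusP$-hard parity problem on planar graphs by showing that $\Ferm_2 A \pmod 8$ captures an odd multiple of the number $H_G$ of Hamilton cycles of $G$. Grouping the sum defining $\Ferm_2 A$ by cycle count, and writing $N_c$ for the number of permutations $\pi$ with exactly $c$ cycles for which $\prod_i A_{i,\pi(i)} = 1$, terms with $c \ge 3$ are divisible by $8$. Since the adjacency matrix of a loopless undirected graph is symmetric with zero diagonal, every Hamilton-cycle permutation pairs with its reverse, giving $N_1 = 2 H_G$ and hence a $c = 1$ contribution of $-4 H_G$. Combining,
\[
\Ferm_2 A \;\equiv\; 4(-1)^n (N_2 - H_G) \pmod 8,
\]
so $4 \mid \Ferm_2 A$ and $\Ferm_2 A / 4 \equiv H_G + N_2 \pmod 2$.

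Next I would argue that $N_2 \equiv 0 \pmod 2$ as soon as $n \ge 5$. A permutation with two cycles corresponds to an unordered partition $V = V_1 \sqcup V_2$ together with a directed cycle on each part; when $|V_i| \ge 3$ the directed cycles on $V_i$ come in reverse pairs, contributing an even multiplier, and when $|V_i| = 1$ a self-loop would be required. The only case that can contribute odd-ly is $|V_1| = |V_2| = 2$, which forces $n = 4$. Hence for $n \ge 5$,
\[
\Ferm_2 A \,/\, 4 \;\equiv\; H_G \pmod 2,
\]
so an oracle for $\Ferm_2$ on planar adjacency matrices yields the parity of the number of Hamilton cycles of a planar graph in polynomial time. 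The small-$n$ corner cases are absorbed by padding via an edge subdivision, which preserves both $H_G$ and planarity.

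What remains is to invoke a source of $\oplusP$-hardness. The natural candidate is computing $H_G \bmod 2$ for planar graphs. This is where I expect the main obstacle: the classical NP-hardness reductions for Hamilton cycles on planar cubic graphs are not obviously parsimonious because the crossover and degree-reduction gadgets can admit several local Hamilton traversals. The plan is either to cite a parity-preserving version in the literature, or to build crossover/clause gadgets whose Hamilton-cycle multiplicity is a fixed odd number, so that a parity-preserving chain from $\oplus 3\text{SAT}$ through general Hamilton cycles to planar Hamilton cycles can be assembled. Combined with the identity above, this would yield Theorem~\ref{thm:k2}.
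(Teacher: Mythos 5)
Your mod-8 analysis is correct and is essentially the paper's argument, just organized by the cycle count of the permutation rather than by undirected circuit partitions: both routes give that for $n\ge 5$ every contribution other than the Hamiltonian ones is divisible by $8$, hence $\tfrac14\,\Ferm_2 A \equiv \#H \pmod 2$, which is exactly the paper's equation~\eqref{eq:ferm-h}. The one ingredient you leave open --- a $\oplusP$-hardness source for the parity of Hamiltonian cycles on \emph{planar} graphs --- is precisely what the paper supplies by citation: Valiant's ``Completeness for Parity Problems''~\cite{valiant-parity} gives a parsimonious reduction from 3-SAT to \textsc{Hamiltonian Cycle} showing that $\oplus$\textsc{Hamiltonian Circuits} is $\oplusP$-complete even for planar graphs in which every vertex has degree $2$ or $3$. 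So the obstacle you anticipate (non-parsimonious crossover and degree-reduction gadgets) is already resolved in the literature, and no new gadget construction is needed; your proposal becomes a complete proof once that reference is invoked.

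One small correction: your padding step is wrong as stated. Subdividing an edge $e$ does not preserve the number of Hamiltonian cycles --- a Hamiltonian cycle of the subdivided graph must pass through the new degree-$2$ vertex, so it corresponds only to Hamiltonian cycles of $G$ that use $e$. But this is immaterial: instances with $n\le 4$ can be decided by brute force, so the reduction simply restricts to $n>4$, as the paper does.
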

\noindent 
By Valiant and Vazirani~\cite{valiant-vazirani} and more generally Toda's theorem~\cite{toda}, the polynomial hierarchy~\cite{noc} reduces to $\oplusP$ under randomized polynomial-time reductions.  Therefore, if $\Ferm_2$ is in P then the polynomial hierarchy collapses, and in particular $\NP \subseteq \RP$.

Theorems~\ref{thm:main} and~\ref{thm:k2} imply new hardness results for the \emph{immanant}.  Given an irreducible character $\chi_\lambda$ of $S_n$, the immanant $\Imm_\lambda$ of a matrix $A$ is
\[
\Imm_\lambda \,A = \sum_{\pi \in S_n} \chi_\lambda(\pi) \prod_{i=1}^n A_{i,\pi(i)} \, .
\]
In particular, taking $\lambda$ to be the parity or the trivial character yields the determinant and the permanent respectively.  Since the determinant is in P but the permanent is \#P-hard~\cite{valiant}, it makes sense to ask how the complexity of the immanant varies as $\lambda$'s Young diagram ranges from a single row of length $n$ (the trivial representation) to a single column (the parity).  Strengthening earlier results of Hartmann~\cite{hartmann}, B\"urgisser~\cite{burgisser-lower} showed that the immanant is \#P-hard if $\lambda$ is a hook or rectangle of polynomial width, 
\[
\lambda_1 = w \, , \lambda_2 = \cdots = \lambda_{n-w+1} = 1
\quad \text{or} \quad \lambda_1 = \cdots  = \lambda_{n/w} = w \, ,
\]
where $w = \Omega(n^\delta)$ for some $\delta > 0$.  Recently Brylinski and Brylinski~\cite{brylinski} improved these results by showing that the immanant is \#P-hard whenever two successive rows have a polynomial ``overhang,'' i.e., if there is an $i$ such that $\lambda_i - \lambda_{i+1} = \Omega(n^\delta)$ for some $\delta > 0$.  The case where $\lambda$ has large width but small overhang, such as a ``ziggurat'' where $\lambda_i = w-i+1$ and $n=w(w+1)/2$, remains open.  

At the other extreme, Barvinok~\cite{barvinok} and B\"urgisser~\cite{burgisser-upper} showed that the immanant is in P if $\lambda$ is extremely close to the parity, in the sense that the leftmost column contains all but $O(1)$ boxes.  Specifically, \cite{burgisser-upper} gives an algorithm that computes $\Imm_\lambda \,A$ in time $O(s_\lambda d_\lambda n^2 \log n)$, where $s_\lambda$ and $d_\lambda$ are the number of standard and semistandard tableaux of shape $\lambda$ respectively.  If the height of $\lambda$ is $n-c$ where $c=O(1)$, then $s_\lambda$ and $d_\lambda$ are bounded above by ${n \choose c} \sqrt{c!} = O(n^c)$ and ${n \choose c} {n+c-1 \choose c} = O(n^{2c})$ respectively, giving a polynomial-time algorithm.

Any function of the cycle structure of a permutation is a \emph{class function}, i.e., it is invariant under conjugation.  Since any class function is a linear combination of irreducible characters, the fermionant is a linear combination of immanants.  If $k$ is a positive integer, it has nonzero contributions from Young diagrams whose width is $k$ or less:
\begin{equation}
\label{eq:ferm-imm}
\Ferm_k \,A = \sum_\lambda d^{(k)}_\lambda \,\Imm_{\lambda^T} \,A \, ,
\end{equation}
where $\lambda$ ranges over all Young diagrams with depth at most $k$, and $d_\lambda$ denotes the number of semistandard tableaux of shape $\lambda$ and content in $\{1,\ldots,k\}$.  To see this, first note that the class function $k^{c(\pi)}$ is the trace of $\pi$'s action on $(\C^k)^{\otimes n}$ by permutation, i.e., 
\[
\pi (v_1 \otimes v_2 \otimes \cdots v_n) = v_{\pi(1)} \otimes v_{\pi(2)} \otimes \cdots \otimes v_{\pi(n)} \, . 
\]
By Schur duality (see e.g.~\cite{fulton-harris}) the multiplicity of $\lambda$ in $(\C^k)^{\otimes n}$ is $d^{(k)}_\lambda$, so
\[
k^{c(\pi)} = \sum_\lambda d^{(k)}_\lambda \chi_\lambda(\pi) \, . 
\]
We then transform $k^{c(\pi)}$ to $(-1)^n (-k)^{c(\pi)}$ by tensoring each $\lambda$ with the sign representation, flipping it to its transpose $\lambda^T$, which has width at most $k$.

Since there are $O(n^{k-1})=\poly(n)$ Young diagrams of width $k$ or less, for any constant $k$ there is a Turing reduction from the fermionant $\Ferm_k$ to the problem of computing the immanant $\Imm_\lambda$ where $\lambda$ is given as part of the input, and where $\lambda$ has width at most $k$.  Then Theorems~\ref{thm:main} and~\ref{thm:k2} give us the following corollary.
\begin{corollary}
For any constant integer $k$, the problem of computing the immanant $\Imm_\lambda \,A$ as a function of $A$ and $\lambda$ is \#P-hard under Turing reductions if $k \ge 3$, and $\oplusP$-hard if $k=2$, even if $\lambda$ is restricted to Young diagrams with width $k$ or less.  
\end{corollary}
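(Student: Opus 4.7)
The plan is to formalize the Turing reduction already sketched in the paragraph preceding the corollary, using equation~(\ref{eq:ferm-imm}) as the backbone. Starting from that identity, $\Ferm_k A$ is an explicit linear combination $\sum_\mu d^{(k)}_\mu \Imm_{\mu^T} A$, where $\mu$ ranges over Young diagrams of depth at most $k$, so that each transpose $\mu^T$ has width at most $k$. Given an oracle that on input $(B,\nu)$ with $\nu$ of width at most $k$ returns $\Imm_\nu B$, I would compute $\Ferm_k A$ by enumerating all such $\mu$, querying the oracle at $(A, \mu^T)$, and summing the responses weighted by $d^{(k)}_\mu$.

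To show this is a polynomial-time Turing reduction for constant $k$, I need two auxiliary facts. First, the number of partitions of $n$ into at most $k$ parts is at most $\binom{n+k-1}{k-1} = O(n^{k-1})$, so the enumeration terminates after polynomially many steps and the reduction issues polynomially many oracle calls. Second, the coefficients $d^{(k)}_\mu$, counting semistandard tableaux of shape $\mu$ with entries in $\{1,\ldots,k\}$, are computable in polynomial time via Weyl's dimension formula for $\mathrm{GL}_k$ (or equivalently the hook content formula); for fixed $k$ this is just a product of simple rational expressions in the parts of $\mu$. With both facts in hand, the whole reduction runs in polynomial time with oracle access to $\Imm_\lambda$ over $\lambda$ of width at most $k$.

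Having established $\Ferm_k \le_T \Imm_\lambda$ in this restricted-$\lambda$ regime, the hardness statements then transfer immediately: Theorem~\ref{thm:main} yields \#P-hardness for $k \ge 3$, and Theorem~\ref{thm:k2} yields $\oplusP$-hardness for $k = 2$. That those theorems hold already for adjacency matrices of planar graphs is more than sufficient; the immanant problem accepts arbitrary matrices as input, and restricting the input class can only make the problem harder.

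The main (minor) obstacle is essentially bookkeeping: verifying efficient enumeration of partitions of depth at most $k$ and efficient evaluation of $d^{(k)}_\mu$. Both are standard for constant $k$, and there is no conceptual difficulty beyond invoking the Schur-duality decomposition already derived and the preceding hardness theorems.
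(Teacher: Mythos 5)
Your proposal is correct and follows essentially the same route as the paper: the corollary is obtained exactly by combining the Schur-duality expansion in~\eqref{eq:ferm-imm} with the observation that there are only $O(n^{k-1})$ diagrams of width at most $k$, giving a polynomial-time Turing reduction from $\Ferm_k$ to the restricted immanant, and then invoking Theorems~\ref{thm:main} and~\ref{thm:k2}. Your added remark that the coefficients $d^{(k)}_\mu$ are efficiently computable (e.g.\ via the hook content formula) is a correct piece of bookkeeping that the paper leaves implicit.
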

\noindent
In particular, unless the polynomial hierarchy collapses, it cannot be the case that $\Imm_\lambda$ is in P for all $\lambda$ of width $2$.  This is somewhat surprising, since these diagrams are ``close to the parity'' in some sense.  

We conjecture that the fermionant is \#P-hard (as opposed to just $\oplusP$-hard) when $k=2$.  Moreover, we conjecture that the immanant is \#P-hard for any family of Young diagrams of depth $n-n^\delta$, or equivalently, any family with a polynomial number of boxes to the right of the first column:
\begin{conjecture}
\label{conj}
Let $\lambda(n)$ be any family of Young diagrams of depth $n-n^\delta$ for some constant $\delta > 0$.  Then $\Imm_{\lambda(n)}$ is \#P-hard.
\end{conjecture}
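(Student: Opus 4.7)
The plan is to reduce from the \#P-hard fermionant instances of Theorem~\ref{thm:main}, leveraging the fact that $\lambda(n)$'s depth $n-n^\delta$ leaves exactly $n^\delta$ boxes outside the first column; in particular, $\lambda(n)$ has width at most $n^\delta + 1$, giving enough room inside $\lambda(n)$ to ``hide'' a small but hard fermionant instance.

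First, let $A$ be the $m\times m$ adjacency matrix of a planar graph on which $\Ferm_k(A)$ is \#P-hard for some constant $k\ge 3$ (Theorem~\ref{thm:main}), with $m = \Theta(n^\delta)$, and form the block matrix $B_M = A \oplus M$ for an auxiliary $(n-m)\times(n-m)$ padding $M$. Since only permutations respecting the block decomposition contribute, restricting $\chi_{\lambda(n)}$ from $S_n$ to $S_m \times S_{n-m}$ via the Littlewood-Richardson rule gives
\[
\Imm_{\lambda(n)}(A \oplus M) = \sum_{\mu \vdash m,\,\nu \vdash n-m} c^{\lambda(n)}_{\mu,\nu}\,\Imm_\mu(A)\,\Imm_\nu(M),
\]
where the relevant $\mu$'s are subshapes of $\lambda(n)$ and hence have width at most $n^\delta + 1$.

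Next, I would vary $M$ through the family of diagonal paddings $M(x) = \mathrm{diag}(x_1, \ldots, x_{n-m})$, so that $\Imm_\nu(M(x))$ is the Schur polynomial $s_\nu(x)$. Polynomially many evaluations at generic points $x$ yield a linear system in the unknowns $\Imm_\mu(A)$; the linear independence of Schur polynomials, combined with a Vandermonde-type argument, should make the system invertible at generic $x$. Once one recovers $\Imm_\mu(A)$ for every $\mu$ of width at most $k$, the fermionant identity~(\ref{eq:ferm-imm}) reassembles $\Ferm_k(A)$ and the Turing reduction is complete. Choosing $m$ large enough that a $k\times(m/k)$ rectangle fits inside $\lambda(n)$ guarantees that all such $\mu$ actually appear as subshapes.

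The main obstacle is the number of unknowns. For a ``ziggurat'' $\lambda(n)$, the subshapes $\mu \subseteq \lambda(n)$ of size $m = \Theta(n^\delta)$ can number $p(m) = \exp(\Theta(\sqrt{m}))$, which is superpolynomial in $n$; naive linear-algebraic inversion is accordingly too expensive. Breaking past this bottleneck seems to demand one of the following: choosing $A$ within a restricted subfamily (e.g.\ low-rank, or the direct sum of many small blocks) so that $\Imm_\mu(A)$ vanishes for all but polynomially many $\mu$ while preserving $\Ferm_k$-hardness; designing sparser padding matrices $M$ so that only polynomially many $\Imm_\mu(A)$ actually couple into the system; or bypassing branching entirely with a gadget that reshapes $\lambda(n)$ into a hook of polynomial arm length, reducing to B\"urgisser's hook hardness result. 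Each of these routes appears to require a genuinely new representation-theoretic idea, which is presumably why the authors leave Conjecture~\ref{conj} open.
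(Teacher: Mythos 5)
This statement is a \emph{conjecture}: the paper offers no proof and explicitly leaves it open, so the only question is whether your sketch closes it. It does not, and beyond the bottleneck you honestly flag, there are two further concrete problems. First, the diagonal-padding step is simply wrong: for $M(x)=\mathrm{diag}(x_1,\ldots,x_{n-m})$, every non-identity permutation hits an off-diagonal zero, so $\Imm_\nu(M(x)) = \chi_\nu(1)\,x_1\cdots x_{n-m}$, \emph{not} the Schur polynomial $s_\nu(x)$ (the Schur polynomial is the $GL$-character of $\mathrm{diag}(x)$, a different object). Consequently every evaluation you generate by varying $x$ is the same linear functional $\sum_{\mu,\nu} c^{\lambda(n)}_{\mu\nu}\chi_\nu(1)\,\Imm_\mu(A)$ scaled by $\prod_i x_i$: the linear system has rank one, and no Vandermonde or genericity argument can make it invertible. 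Extracting individual $\Imm_\mu(A)$ requires structurally richer (non-diagonal) paddings, which is exactly the delicate machinery in Hartmann's and B\"urgisser's hardness proofs, and even then your second obstacle stands: the number of unknowns $\mu\subseteq\lambda(n)$ with $|\mu|=\Theta(n^\delta)$ can be $\exp(\Theta(n^{\delta/2}))$, which is superpolynomial.

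Second, and more structurally, the approach is circular on precisely the hardest instances of the conjecture. The conjecture covers width-$2$ families, e.g.\ a first column of depth $n-n^\delta$ plus a second column of $n^\delta$ boxes. For such $\lambda(n)$, every subshape $\mu$ appearing in the Littlewood--Richardson restriction has width at most $2$, so even a fully successful inversion would only hand you immanants of width-$\le 2$ shapes, and via \eqref{eq:ferm-imm} only $\Ferm_2$ --- whose \#P-hardness is exactly what the paper cannot prove (Theorem~\ref{thm:k2} gives only $\oplusP$-hardness, and the paper notes that the natural Tutte-polynomial route dies at $T(G;-1,-1)$, which is polynomial-time computable via the bicycle space). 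So for width-$2$ families the target of your reduction is the conjecture itself; the $k\ge 3$ hardness of Theorem~\ref{thm:main} is unavailable because no width-$\ge 3$ shape fits inside $\lambda(n)$. Your closing assessment is the correct one: a genuinely new idea is needed, and the statement remains open both in your writeup and in the paper.
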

\noindent
Roughly speaking, this would imply that the results of~\cite{barvinok,burgisser-upper} showing that $\Imm_\lambda$ is in P if $\lambda$ has depth $n-O(1)$ are tight.

\section{\#P-hardness from circuit partitions and the Tutte polynomial}

\begin{proof}[Proof of Theorem~\ref{thm:main}]  We simply recast known facts about the circuit partition polynomial and the Tutte polynomial.  A \emph{circuit partition} of $G$ is a partition of $G$'s edges into circuits.  Let $r_t$ denote the number of circuit partitions containing $t$ circuits; for instance, $r_1$ is the number of Eulerian circuits.  The \emph{circuit partition polynomial} $j(G;z)$ is the generating function
\begin{equation}
\label{eq:j}
j(G;z) = \sum_{t=1}^\infty r_t z^t \, .
\end{equation}
This polynomial was first studied by Martin~\cite{martin}, with a slightly different parametrization; see also~\cite{arratia,bollobas,bouchet,ellis,jaeger,vergnas79,vergnas88}.

To our knowledge, the fact that $j(G;z)$ is \#P-hard for most values of $z$ first appeared in~\cite{ellis-sarmiento}.  We review the proof here.  Recall that the \emph{Tutte polynomial} of an undirected graph $G=(V,E)$ can be written as a sum over all spanning subgraphs of $G$, i.e., all subsets $S$ of $E$, 
\begin{equation}
\label{eq:tutte}
T(G;x,y) = \sum_{S \subseteq E} (x-1)^{c(S)-c(G)} \,(y-1)^{c(S)+\abs{S}-\abs{V}} \, . 
\end{equation}
Here $c(G)$ denotes the number of connected components in $G$.  Similarly, $c(S)$ denotes the number of connected components in the spanning subgraph $(V,S)$, including isolated vertices.  When $x=y$, 
\begin{equation}
\label{eq:tutte2}
T(G;x,x) = \sum_{S \subseteq E} (x-1)^{c(S)+\ell(S)-c(G)} \, ,
\end{equation}
where $\ell(S) = c(S)+\abs{S}-\abs{V}$ is the total excess of the components of $S$, i.e., the number of edges that would have to be removed to make each one a tree.

\begin{figure}
\begin{center}
\includegraphics[width=\textwidth]{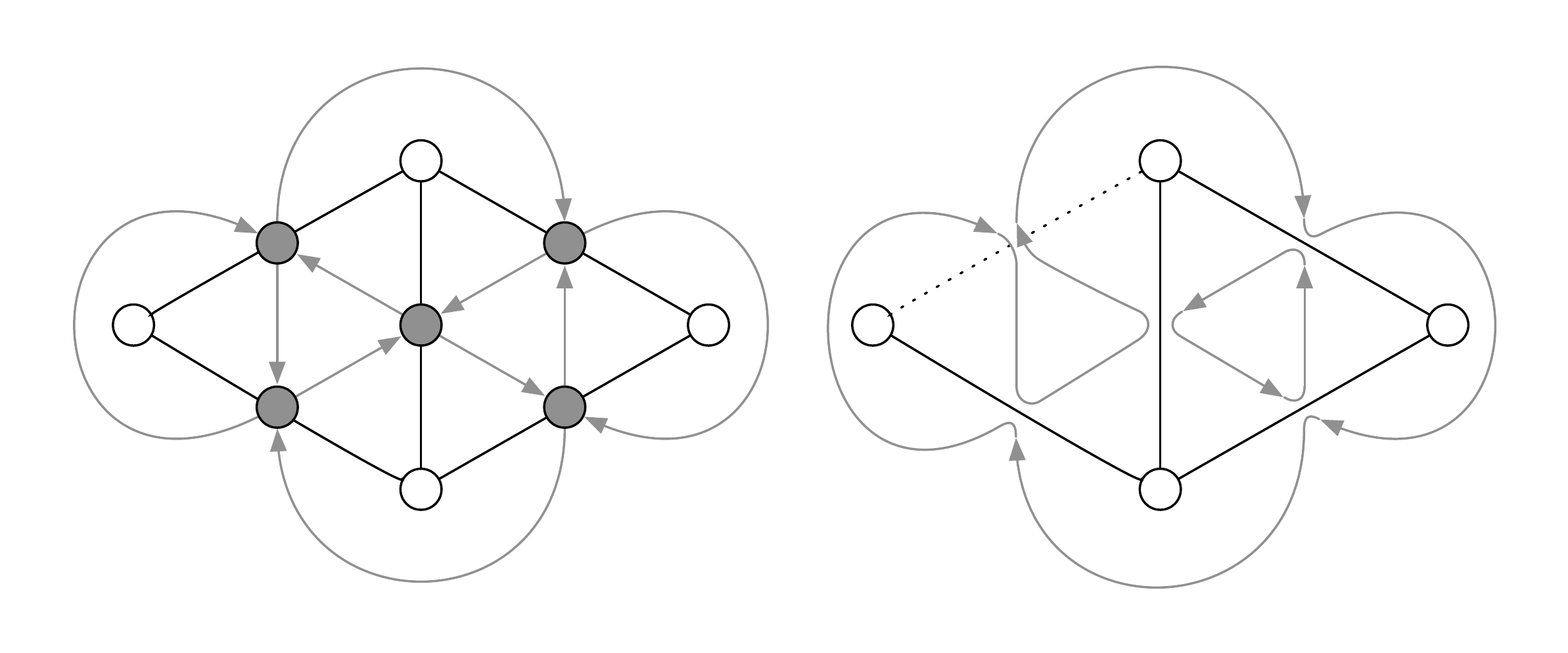}
\end{center}
\caption{Left, a planar graph $G$ (white vertices and black edges) and its medial graph $G_m$ (gray vertices and directed edges).  Right, a subset of the edges of $G$, and the corresponding circuit partition of $G_m$.} 
\label{fig:medial-cycle}
\end{figure}

If $G$ is planar, then we can define its directed medial graph $G_m$.  Each vertex of $G_m$ corresponds to an edge of $G$, edges of $G_m$ correspond to shared vertices in $G$, and we orient the edges of $G_m$ so that they go counterclockwise around the faces of $G$.  Each vertex of $G_m$ has in-degree and out-degree $2$, so $G_m$ is Eulerian.  The following identity is due to Martin~\cite{martin}; see also~\cite{vergnas79}, or~\cite{austin} for a review.
\begin{equation}
\label{eq:martin}
j(G_m;z) = z^{c(G)} \,T(G;z+1,z+1) \, . 
\end{equation}
We illustrate this in Figure~\ref{fig:medial-cycle}.  There is a one-to-one correspondence between subsets $S \subseteq E$ and circuit partitions of $G_m$.  Let $v$ be a vertex of $G_m$ corresponding to an edge $e$ of $G$.  If $e \in S$, the circuit partition ``bounces off $e$,'' connecting each of $v$'s incoming edges to the outgoing edge on the same side of $e$.  If $e \notin S$, the partition crosses over $e$ in each direction.  The number of circuits is then $c(S)+\ell(S)$, in which case~\eqref{eq:tutte2} yields~\eqref{eq:martin}.

Vertigan~\cite{vertigan} proved that the Tutte polynomial for planar graphs is $\#P$-hard under Turing reductions, except on the hyperbolas $(x-1)(y-1) \in \{1, 2\}$ or when $(x,y) \in \{(1,1), (-1,-1), (\omega,\omega^*), (\omega^*,\omega)\}$ where $\omega = \e^{2\pi i /3}$.  It follows that computing $j(G;z)$ is $\#P$-hard unless $z \in \{ 0, -2, \pm 1, \pm \sqrt{2} \}$, and in particular if $z < -2$.

To reduce the circuit partition polynomial to the fermionant, we simply have to turn Eulerian-style circuit partitions, which cover each edge once, into Hamiltonian-style ones, which cover each vertex once.  Given a directed graph $G$, let $G_e$ be its edge graph.  Each vertex of $G_e$ corresponds to an edge $(u,v)$ of $G$, and there is an edge between $(u,v)$ and $(u',v')$ if $v=u'$ so that $(u,v)$ and $(u',v')$ could form a path.  Then each circuit partition of $G$ corresponds to a permutation $\pi$ of the vertices of $G_e$, and 
\[
\Ferm_k \,A_e = j(G;-k) \, ,
\]
where $A_e$ is the adjacency matrix of $G_e$.  In particular, if $G$ is planar, $G_m$ is its medial graph, $G_{m,e}$ is the edge graph of $G_m$ (which is also planar) and $A_{m,e}$ is its adjacency matrix, then
\[
\Ferm_k \,A_{m,e} = (-k)^{c(G)} \,T(G;1-k,1-k) \, ,
\]
and this is \#P-hard for any $k > 2$. 
\end{proof}

As alluded to above, Goldberg and Jerrum~\cite{inapprox-tutte} showed that $T(G;x,y)$ is inapproximable for planar graphs and many values of $x$ and $y$, and in particular in the region $x, y < -1$.  This implies that, unless $\NP \subseteq \RP$, there can be no FPRAS for $\Ferm_k$ for $k > 2$, preventing us from computing the fermionant to within multiplicative error $\eps$ in time polynomial in $n$ and $1/\eps$.  

\section{$\oplusP$-hardness from Hamiltonian circuits}

\begin{proof}[Proof of Theorem~\ref{thm:k2}] 
We will prove Theorem~\ref{thm:k2} by showing that the fermionant $\Ferm_2$ can be used to compute the parity of the number of Hamiltonian circuits in an undirected graph of size $n > 4$.

Let $A$ be the adjacency matrix of an undirected graph $G$ with $n$ vertices. Each permutation $\pi \in S_n$ that gives a non-zero contribution to $\Ferm_2(A)$ corresponds to a Hamiltonian-style circuit partition of $G$; that is, a vertex-disjoint union of undirected cycles that includes every vertex.  A $2$-cycle in $\pi$ corresponds to a circuit that travels back and forth along a single edge.  Any circuit of length greater than $2$ can be oriented in two ways, so a circuit partition with $c_2$ $2$-cycles and $c'$ longer cycles corresponds to $2^{c'}$ permutations.  

Taken together, these permutations contribute $(-1)^n \,(-2)^{c_2} (-4)^{c'}$ to $\Ferm_2 \,A$.  This is a multiple of $8$ unless $n=2$ and $c_2=1$, or $n=4$ and $c_2=2$, or if $c_2=0$ and $c'=1$.  This last case corresponds to a Hamiltonian cycle.  Thus 
\begin{equation}
\label{eq:ferm-h}
\frac{1}{4} \,\Ferm_2 \,A \equiv_2 \#H \, . 
\end{equation}
Valiant~\cite{valiant-parity} showed, using a parsimonious reduction from 3-SAT to \textsc{Hamiltonian Cycle}, that the problem $\oplus$\textsc{Hamiltonian Circuits} of computing $\#H \bmod 2$ is $\oplusP$-complete.  Indeed, he showed this even in the case where $G$ is planar and every vertex has degree $2$ or $3$.  Since~\eqref{eq:ferm-h} gives a reduction from \text{$\oplus$\textsc{Hamiltonian Circuits}} to $\Ferm_2$, this shows that  $\Ferm_2$ is $\oplusP$-hard as well.
\end{proof}


More generally, $\Ferm_k$ is at least as hard as computing the number of Hamiltonian circuits mod $k$.  But for $k \ge 3$, our \#P-hardness result is stronger.

\section{Conclusion}

Is the fermionant \#P-hard for $k=2$?  And are immanants \#P-hard even for some Young diagrams of width $2$, as in Conjecture~\ref{conj}?  The reduction of Theorem~\ref{thm:main} fails in this case, since for $x=y=-1$ the Tutte polynomial is easy.  In particular, if $G=(V,E)$ then
\[
\Ferm_2 \,A_{m,e} = (-2)^{c(G)} \,T(G;-1,-1) = (-2)^{c(G)} \,(-1)^{|E|} \,(-2)^{\dim B} \, .
\]
Here $\dim B$ is the dimension of $G$'s \emph{bicycle space}---the set of functions from $E$ to $\Z_2$ that can be written both as linear combinations of cycles and as linear combinations of stars---which we can determine in polynomial time using linear algebra.

However, there is no reason to think that another reduction couldn't work.  After all, the number of Eulerian circuits of $G_m$ is the number of spanning trees of $G$, which is also in P---but Brightwell and Winkler showed that counting Eulerian circuits is \#P-hard in general~\cite{brightwell-winkler}.

Finally, it is interesting that our proof of \#P-hardness is an indirect reduction from, say, the chromatic polynomial, passing through the Tutte polynomial and its ability to perform polynomial interpolation by decorating the graph.  Thus another open question is whether there is a more direct reduction to the fermionant from the permanent or, say, \textsc{\#Hamiltonian Circuits}.

\paragraph{Acknowledgments.}  This work was supported by the National
Science Foundation grant CCF-1117426, and by the ARO under contract
W911NF-09-1-0483.  We are grateful to Uwe-Jens Wiese for telling us
about the fermionant, and to Alex Russell, Leonard J. Schulman, Leslie Ann Goldberg, and David Gamarnik for helpful discussions.

\end{document}